\documentclass[journal]{IEEEtran}
\usepackage{amsmath,amsfonts, amsthm}
\usepackage{array}
\usepackage[caption=false,font=normalsize,labelfont=sf,textfont=sf]{subfig}
\usepackage{textcomp}
\usepackage{stfloats}
\usepackage{url}
\usepackage{verbatim}
\usepackage{graphicx}
\usepackage{placeins} 
\usepackage{microtype}
\usepackage[caption=false,font=normalsize,labelfont=sf,textfont=sf]{subfig}

\def\BibTeX{{\rm B\kern-.05em{\sc i\kern-.025em b}\kern-.08em
    T\kern-.1667em\lower.7ex\hbox{E}\kern-.125emX}}
\usepackage{balance}
\usepackage[ruled]{algorithm2e}
\usepackage{amssymb}

\def\gap{0.9ex}
\abovedisplayskip\gap
\belowdisplayskip\gap
\abovedisplayshortskip\gap
\belowdisplayshortskip\gap

\newtheorem{lemma}{Lemma}

\begin{document}
\onecolumn 
\thispagestyle{empty}
\noindent {\Huge IEEE Copyright Notice}

\bigskip

\noindent {\large \copyright~2026 IEEE. Personal use of this material is permitted. Permission from IEEE must be obtained for all other uses, in any current or future media, including reprinting/republishing this material for advertising or promotional purposes, creating new collective works, for resale or redistribution to servers or lists, or reuse of any copyrighted component of this work in other works.}
\newpage
\twocolumn 
\setcounter{page}{1}

\title{Adaptive Power Allocation and User Scheduling for LEO Satellites using Channel Predictions}
\author{
Lachlan~Drake, Lawrence~Ong, and Duy~T.~Ngo
\thanks{This article has been accepted for publication in IEEE TVT. This is the authors' version. The final published version will be available in IEEE Xplore. DOI: 10.1109/TVT.2026.3685307}
\thanks{L. Drake, L. Ong, and D.~T. Ngo are with the School of Engineering, The University of Newcastle, Callaghan NSW 2308 Australia.}
\thanks{This work is supported in part by the NSW Connectivity Innovation Network, and the Commonwealth Scientific \& Industrial Research Organisation.}
}

\markboth{IEEE Transactions on Vehicular Technology,~Vol.~XX, No.~X, XXX~2026}%
{Drake \MakeLowercase{\textit{et al.}}: Adaptive Power Allocation and User Scheduling for LEO Satellites using Channel Predictions}

\maketitle

\begin{abstract}
	Low earth orbit (LEO) satellites are a key technology to enable connectivity for rural and remote users. Communication satellites in LEO can provide coverage to much larger areas than terrestrial or aerial systems, while offering improved data rates when compared with geostationary systems. However, a major challenge with LEO satellite communications is the high mobility of the satellite, which results in a rapidly changing communication channel. Due to this, it is challenging to fairly allocate communication resources to multiple users in the system. This work proposes an Adaptive Power Allocation and Scheduling Scheme (APASS) to ensure user fairness in the downlink of a LEO satellite system serving mobile ground users. First, a novel channel and transmission model is introduced to capture the variability in channel statistics due to the satellite’s trajectory. Then, a non-convex optimization problem is formulated to maximize the minimum rate across all ground users over a fixed set of time slots. To solve this problem, the proposed APASS dynamically allocates power and schedules transmissions based on predicted future channel gains. Numerical results show that APASS achieves strong performance even with substantial prediction errors, faring close to an upper bound that assumes perfect future channel knowledge. Furthermore, it improves the minimum user rate by a factor of 2.98 compared to equal-power allocation and maintains user fairness with a Jain's fairness index of well above 0.99.
	
\end{abstract}

\begin{IEEEkeywords}
Channel modeling, convex optimization, low earth orbit, power allocation, satellite communications,  user scheduling
\end{IEEEkeywords}

\section{Introduction}
\IEEEPARstart{W}{ith} the expected increased demand and expanded use cases for the sixth generation (6G) of wireless communication technology, the use of space-based communication systems is expected to increase. In particular, one of the key visions of 6G is global coverage to connect remote areas, where it is impractical to deploy terrestrial communication infrastructure. Satellite systems are key to enabling this vision, as they offer coverage over larger areas than terrestrial or aerial systems. Further, recent advances in launch technology have greatly decreased the cost of deploying satellites into low earth orbit (LEO). LEO satellites offer lower latency and can achieve higher data rates when compared with geostationary services. 

A major challenge with LEO satellite communications is the high mobility of the satellite, which results in a rapidly changing communication channel. This poses difficulties in fairly allocating communication resources to multiple users in the system. While highly varying channels are common in many scenarios, the known trajectory of satellites presents an opportunity to predict the channel thereby achieving better rate performance and fairness among the users.
Most prior works in resource allocation for satellite communications consider the channel as a stationary stochastic process for some period of time \cite{Letzepis2008}, \cite{You2020}, \cite{Gao2021}, but not one where the change in the relative positions of the transceivers is taken into account.

While some previous studies examine adaptive schemes, they generally either consider only single-step prediction or a statistical model without specific predictions. Hu et al. \cite{Hu2020} consider power control for a LEO constellation, but use a long-term statistical expectation for channel gains. Lu et al. \cite{Lu2025} consider adaptive power control in a satellite system, by using statistical models of the satellite elevation angles. Single-step prediction methods, such as Kalman filters, are often proposed for channel prediction in terrestrial and orbital systems alike \cite{Shoarinejad2003}, \cite{Vu2023}. However, most communications satellites operate in well-defined, predictable orbital trajectories. This deterministic orbital knowledge presents an opportunity to enhance power control schemes by enabling precise temporal predictions of channel gains.

In this paper, we present APASS, a novel framework that leverages channel gain predictions to dynamically allocate transmit power and schedule user transmissions. Unlike previous adaptive schemes, APASS provides a framework to schedule users and allocate power based on specific channel gain predictions multiple time steps ahead. The unique problem formulation uses channel predictions to optimize over the entire time horizon, but remains flexible by updating its allocations as accurate channel estimations are obtained. This allows the transmitter to better plan for and adapt to changes in the channel conditions, compared to schemes that only consider the current time slot or use only a single prediction. This applies especially to the highly-dynamic, yet predictable, LEO satellite channel.

This paper develops a space-ground channel and transmission model that incorporates the dynamic variations in channel parameters due to satellite motion, while also capturing the power spectrum characteristics associated with Doppler shift. Then, we formulate an optimization problem to maximize user fairness across several time slots in the downlink. This problem is intractable due to its non-convex structure and the stochastic nature of the channel. However, by using predictions of the future channel alongside successive convex approximation, we propose an adaptive power allocation and scheduling scheme to achieve a practical solution. Through numerical results, we show that the minimum user rate achieved by the scheme far exceeds schemes that do not use channel predictions. 

\section{LEO Satellite Downlink Channel Model}\label{sec:SysModel}
Consider a LEO satellite capable of transmitting highly directional spotbeams. Within each beam footprint, there are $K$ single-antenna ground user equipments (UEs). This work considers the power allocation for UEs within a single beam. The considered channel is adapted from the recommendations for non-terrestrial networks developed by the 3rd Generation Partnership Project (3GPP)~\cite{3GPP2020}. The channel is modeled with two states, representing line-of-sight (LoS) and non-LoS (NLoS) conditions. Both the large-scale parameters and the fading parameters vary with the satellite's elevation angle.

\subsection{Large-scale Parameters}

Let $R_\text{E}$ represent the radius of the Earth and $d_0$ the satellite altitude. The distance $d$ between the satellite and the UE is a function of the elevation angle $\alpha$ as 
\begin{align}
	d(\alpha) &= \sqrt{R_\text{E}^2\sin^2(\alpha)+d_0^2+2d_0R_\text{E}}-R_\text{E}\sin(\alpha).
\end{align}

Due to the movement of the satellite over great distances in a short period of time, the large-scale parameters of the channel are subject to significant change over the course of one visibility period. The total path loss in dB is given by
\begin{align}
	\Gamma_{\text{dB}}(\alpha) &=  \mathsf{FSPL}(\alpha) + \mathsf{SF}(\alpha) +\mathsf{CL}(\alpha) + \mathsf{PL}_\text{g}(\alpha),
 \end{align}
where $\mathsf{FSPL}(\alpha)$ is the free-space path loss, $\mathsf{SF}(\alpha)$ is shadow fading, $\mathsf{CL}(\alpha)$ is clutter loss, and $\mathsf{PL}_\text{g}(\alpha)$ is attenuation due to atmospheric gases. All values are in dB.
Specifically, $\mathsf{FSPL}(\alpha)=20\log_{10}(\frac{4\pi d(\alpha) f_\text{c}}{c})$ with the carrier frequency $f_\text{c}$, and the speed of light $c$. $\mathsf{SF}(\alpha)$ is a log-normally distributed random variable with zero mean and variance $\sigma_\text{SF}^2$ determined by the conditions specified by the 3GPP report~\cite{3GPP2020}. $\mathsf{SF}(\alpha_1)$ and $\mathsf{SF}(\alpha_2)$ are independent for distinct angles $\alpha_1$ and $\alpha_2$; they are also independent for distinct receivers. The clutter loss $\mathsf{CL}(\alpha)$ is caused by attenuation due to ground objects, with values for clutter loss at various elevation angles and frequencies given by the 3GPP report \cite{3GPP2020}. When there is a direct LoS path, $\mathsf{CL}(\alpha) = 0 $ dB. The atmospheric attenuation $\mathsf{PL}_\text{g}(\alpha)$ follows the ITU standards \cite[Annex 2]{ITU676}. 
 \subsection{Fading Parameters}
 At high elevation angles, there is often a LoS component in the received signal. However, it is also possible that the LoS path is blocked or shadowed. Our channel model uses a semi-Markov process with two states, representing LoS and NLoS conditions. The small-scale fading distribution is Rician in the LoS state \cite{Letzepis2008, You2020} and Rayleigh in the NLoS state.
 
 The fading (or small-scale) parameters $a(t)\in\mathbb{C}$ are generated by using a sum-of-sinusoids model \cite{Jakes1994}. This method ensures that the generated channel has a power spectrum and auto-correlation function that reflects the satellite's dynamics. $L$ sinusoids are summed, each with a random phase term $\theta_{l,i}$ which is independently
and identically distributed (i.i.d.) with a uniform distribution in the range $[0,~2\pi)$. The Doppler frequencies $\nu_{l,i}$ for each sinusoid are calculated based on the maximum Doppler shift $\nu_{\text{max}}$. Specifically, $a(t)= a_{1}(t) + ja_{2}(t)$ where
\begin{align}
 	a_{i}(t)&=\sqrt{\frac{2}{L}} \sum_{l=1}^L\cos(2\pi \nu_{l,i}t+\theta_{l,i}), ~i=1,2,\\
 	\nu_{l,i}&=\nu_{\text{max}}\cos\left(\frac{\pi}{2L}(l-0.5)+(-1)^{i-1}\frac{\pi}{12L}\right).
 \end{align}
  The maximum Doppler shift $\nu_{\text{max}}$ is based only on the orbital dynamics of the satellite. Any Doppler effect associated with UE movement is assumed negligible in comparison due to the much higher relative speed of the satellite. $L$ is chosen to control the fidelity of the simulation  \cite{Patzold2006}. Note that the channels for different UEs are generated independently.

 \subsection{Channel Gain}
 The fading parameters $a(t)$ are then scaled by the large-scale parameter $\Gamma(\alpha)$ to obtain the channel gain $h(t)\in\mathbb{C}$. For the Rayleigh (or NLoS) state, $h(t)=\frac{1}{\sqrt{\Gamma(\alpha)}}a(t)$.
 For the Rician (or LoS) state with $K$-factor $\kappa$ and random uniformly-distributed phase $\phi_r$, the channel gain is generated as
 \begin{align}
 	h(t) &= \frac{1}{\sqrt{\Gamma(\alpha)}}\left[\frac{a(t)}{\sqrt{\kappa+1}}+\sqrt{\frac{\kappa}{\kappa+1}}e^{j(2\pi\nu_{\text{max}}t+\phi_{r})}\right].
 \end{align}

 Note: although many of the channel parameters are determined by the satellite elevation angle $\alpha$, the channel is inherently stochastic. Its randomness is due to the phase terms $\theta_{l,i}$ in small-scale fading, and its stochasticity arises from the shadow-fading term $\mathsf{SF}(\alpha)$ in large-scale fading.

\subsection{LoS Probability}
The LoS conditions (whether a UE is in the LoS or NLoS state at any time) are independent among the UEs and over time. Their probability distributions are based on the UE environment (e.g., urban or rural) and the elevation angle, as stated in the 3GPP report \cite{3GPP2020}.
In the LoS state, the small-scale coefficients are generated with a Rician distribution, there is no clutter loss, and the shadow fading distribution has significantly lower variance.

\section{LEO Satellite Transmission Model}
Consider a set of UEs indexed by $k\in\mathcal{K} \triangleq \{1,\dotsc,K\}$. In discrete time, let $N$ denote the number of coherence intervals under consideration and $T$ the number of transmitted symbols per coherence time $T_\text{coh}$. For a UE $k$ at symbol time $i\in\{1,\dotsc,NT\}$, let $h_{k,i}$ denote the discrete channel gain, $x_{i}\in\mathbb{C}$ the transmitted symbol, and $z_{k,i}$ the additive noise.  The received signal is then $y_{k,i} = h_{k,i}x_{i}+z_{k,i}$.

Recall that the channel gains among the UEs, $\{h_{k,i}: k \in \mathcal{K}\}$, are mutually independent, but for each UE~$k$, its channel gains over time $\{h_{k,i}: i \in \{1, \dotsc, NT\}\}$ are correlated.  
The additive noise is i.i.d with distribution $z_{k,i}\sim\mathcal{CN}(0,\sigma^2)$. The noise variance is $\sigma^2 = k_{\text{B}}T_{\text{n}}W$, where $k_{\text{B}}$ is the Boltzmann constant, $T_{\text{n}}$ is the noise temperature, and $W$ is the system bandwidth.

It is assumed that the channel gain is constant for a coherence time of $T$ symbols. So, the $NT$ transmissions can be viewed as $N$ time slots of length $T$ each, such that for each time slot $n \in \{1,\dotsc, N\}$, we have $h_{k,i}=\bar{h}_{k,n} ~\forall i\in \{(n-1)T+1,\dotsc, nT\}$.

Assume that the transmitter knows the channel gains $\bar{h}_{k,n}$ for each time slot~$n$ only at the start of the time slot. 
We impose a per time slot power constraint of $\frac{1}{T}\sum_{i=(n-1)T}^{nT} |x_i|^2 \leq P_\text{total}$ for each~$n$.

The satellite transmits the superposition of signals for multiple users across a single channel, with the transmission for each UE $k$ in time slot $n$ having power $\bar p_{k,n}$.  For each UE, the transmissions to all other UEs are treated as interference, i.e., no interference cancellation methods are employed. Therefore, the signal-to-interference-plus-noise ratio (SINR) for the downlink is a function of the channel gain for a UE~$k$ at slot $n$, i.e., $\bar{h}_{k,n}$ and all power allocations for slot $n$, i.e., $\boldsymbol{\bar{p}}_{n}= \begin{bmatrix} \bar{p}_{1,n} & \bar{p}_{2,n} & \dotsc & \bar{p}_{K,n} \end{bmatrix}^\text{T}$. Let the channel power gain be denoted by $g_{k,n} =\lvert\bar{h}_{k,n}\rvert^2$.
The SINR for a UE~$k$ in a time slot $n$ is then
\begin{equation}
    \text{SINR}_{k,n} \left( \boldsymbol{\bar{p}}_{n},\bar{g}_{k,n} \right) = \frac{g_{k,n} \bar{p}_{k,n}}{g_{k,n} \underset{j\in \mathcal{K}\setminus\{k\}}{\sum} \bar{p}_{j,n}+\sigma_{k,n}^2}. \label{eq:sinr}
\end{equation}

For a sufficiently large number of samples $T$, the achievable rate for each UE~$k$ and each time slot $n$ in bits/s/Hz is arbitrarily close to the following:
\begin{equation}
	R_{k,n} \left( \boldsymbol{\bar{p}}_{n}, g_{k,n} \right)= \log_2\left( 1+\text{SINR}_{k,n} \left( \boldsymbol{\bar{p}}_{n},g_{k,n} \right) \right). \label{eq:slot-rate}
\end{equation}

Denote the $K\times N$ power allocation matrix and the UE $k$'s $1\times N$ channel gain vector as
\begin{align*} 
	\boldsymbol{\bar{P}} &= 
	\begin{bmatrix}
		\bar{p}_{1,1} & \dotsc & \bar{p}_{1,N}\\
		\vdots & & \vdots\\
		\bar{p}_{K,1} & \dotsc & \bar{p}_{K,N}
	\end{bmatrix}, ~~
	\boldsymbol{g}_{k} &= 
	\begin{bmatrix} 
		g_{k,1} & \dotsc & g_{k,N}
	\end{bmatrix}.
\end{align*}
The total rate for UE~$k$ after a transmission block of $N$ slots is then

\begin{equation}
    R_{k} \left( \boldsymbol{\bar{P}}, \boldsymbol{g}_k \right)=\frac{1}{N}\sum_{n=1}^{N} R_{k,n} \left( \boldsymbol{\bar{p}}_n, \boldsymbol{g}_{k} \right). \label{eq:total-rate-def}
\end{equation}

\section{Problem Formulation}

This paper aims to find an optimal power allocation for fairness among the UEs, i.e. finding $\boldsymbol{\bar{P}}$ to maximize the minimum achievable rate across $N$ slots for all $K$ UEs. Recall that $\boldsymbol{g}_n$ is known to the transmitter and the receiver only at the beginning of each slot $n$. As such, $\boldsymbol{p}_n$ can only be chosen based on the known channel gain up to slot $n$, and the power allocation for the previous slots. Denote this information as
\begin{equation}
	\boldsymbol{A}_{n} = 
	\begin{bmatrix}
		g_{1,1} & \dotsc & g_{1,n} & \bar{p}_{1,1} & \dotsc  & \bar{p}_{1,(n-1)}\\
		\vdots & & & & & \vdots\\
		g_{K,1} & \dotsc & g_{K,n} & \bar{p}_{K,1} & \dotsc  & \bar{p}_{K,(n-1)}
	\end{bmatrix}.
	\end{equation}

 The power allocation for each UE in slot $n$ is thus given by the following function (which can be deterministic or non-deterministic): $f_{k,n}: \mathbb{C}^{K\times n}\times \mathbb{R}_0^{+^{K\times (n-1)}} \rightarrow \mathbb{R}_0^+$, i.e.,
\begin{equation}
	\bar{p}_{k,n} = f_{k,n}\left( \boldsymbol{A}_{n} \right).\\
 \label{pbar}
\end{equation}
For any chosen power-allocation scheme $\boldsymbol{f} := [f_{k,n}]_{k=1,\dots,K; \, n=1,\dots,N}$, the achievable rates across $N$ slots are random variables that are functions of the channel gains. Hence, the aim is to optimize the functions to maximize the expected value of the minimum UE rate, where the expectation is taken with respect to the channel gains $\boldsymbol{g} := [g_{k,n}]_{k=1,\dots,K; \, n=1,\dots,N}$. This observation results in the following max-min problem:
\begin{subequations}\label{eq:ProbForm}
	\begin{align}  
        R^* = \max_{\boldsymbol{f}}~ 
        & \mathbb{E}_{\boldsymbol{g}}\left[\underset{i\in \{1,\dotsc, K\}}{\min} ~ R_{i} \left( \boldsymbol{\bar{P}}, \boldsymbol{g}_{i} \right) \right] \label{subeq:ProbForm_Obj} \\
		\textrm{s.t.}~~&\sum_{k=1}^{K} \bar{p}_{k,n} \leq P_{\text{total}} \quad \forall n \\
		&\bar{p}_{k,n} \geq 0 \quad \forall k, \forall n.
	\end{align}
\end{subequations}

\section{APASS: Adaptive Power Allocation and Scheduling Scheme}
Problem \eqref{eq:ProbForm} is a stochastic, non-convex problem which is very challenging to solve.
In the following, we first derive an upper bound to the optimal value $R^*$ of \eqref{eq:ProbForm} by using a genie-aided argument and approximate it using a successive convex approximation. For achievability, we then propose an adaptive power allocation and scheduling scheme (APASS). By setting UE transmit powers to zero in certain time slots, APASS performs UE scheduling. While constituting a lower bound to the optimal value $R^*$ of \eqref{eq:ProbForm}, APASS offers a practical solution to an otherwise intractable problem.

\subsection{Genie-Aided Upper Bound and Successive Convex Approximation} 
Let us consider the genie-aided scenario, in which all \emph{past} and \emph{future} channel gains are known to the transmitter at every slot $n$. For a specific channel realization $\boldsymbol{g}$, we define the following, which is the minimum UE rate achieved by the optimal power allocation when $\boldsymbol{g}$ is known at the beginning:
 \begin{subequations}\label{eq:GenieProb}
	\begin{align}
		R^\text{u} (\boldsymbol{g}) = \underset{\boldsymbol{\bar{P}}}\max & \underset{k\in \{1,\dotsc, K\}}{\min} \left\{R_{k}\left(\boldsymbol{\bar{P}}, \boldsymbol{g}_{k}\right)\right\},\label{subeq:Genie_Obj}\\
		\textrm{s.t.}~~ & \sum_{k=1}^{K} \bar{p}_{k,n} \leq P_{\text{total}} \quad \forall n\label{eq:power-constraint1}\\
		& \bar{p}_{k,n} \geq 0 \quad \forall k, \forall n.
	\end{align}
\end{subequations}

Denote the minimum rate attained with power-allocation scheme~$\boldsymbol{f}$ over the channel gain realization $\boldsymbol{g}$ by $R(\boldsymbol{f},\boldsymbol{g}) := \underset{i\in \{1,\dotsc, K\}}{\min} ~ R_{i} \left( \boldsymbol{\bar{P}}, \boldsymbol{g}_{i} \right)$, where $\boldsymbol{\bar{P}}$ is a function of $\boldsymbol{f}$ according to \eqref{pbar}.
Also, let $\boldsymbol{f}^*$ be a power-allocation scheme that attains $R^*$. Clearly, $R^\text{u}(\boldsymbol{g}) \geq R(\boldsymbol{f},\boldsymbol{g})$ for any $\boldsymbol{f}$ and any $\boldsymbol{g}.$ Consequently, $\mathbb{E}_{\boldsymbol{g}}[R^\text{u}(\boldsymbol{g})] \geq \mathbb{E}_{\boldsymbol{g}}[R(\boldsymbol{f}^*,\boldsymbol{g})] = R^*$.

Problem \eqref{eq:GenieProb} is still non-convex due to \eqref{subeq:Genie_Obj}. In what follows, we will solve it via successive convex approximation where each iteration involves solving a (convex) geometric program (GP) only. 
First, we transform \eqref{eq:GenieProb} into the following equivalent form:
\begin{subequations}\label{eq:EquivProb2}
	\begin{align}
		\underset{\boldsymbol{\bar{P}}}{\min} \quad & b\\
		\textrm{s.t.}~~ &\prod_{n=1}^N \frac{ g_{k,n} \underset{j\in \mathcal{K}\setminus\{k\}}{\sum} \bar{p}_{j,n}+\sigma_{k,n}^2 }{ g_{k,n} \underset{l\in \mathcal{K}}{\sum} \bar{p}_{l,n}+\sigma_{k,n}^2}\leq b \quad \forall k \label{subeq:Equiv_con1}\\
		&\sum_{k=1}^{K} \bar{p}_{k,n} \leq P_{\text{total}} \quad \forall n\\
		&\bar{p}_{k,n} \geq 0 \quad \forall k, \forall n.
	\end{align}
\end{subequations}

Let us now define $\gamma_{k}(\boldsymbol{\bar{P}})\triangleq \prod_{n=1}^N g_{k,n} \underset{j\in \mathcal{K}\setminus\{k\}}{\sum} \bar{p}_{j,n}+\sigma_{k,n}^2 $. Also define $\lambda_{k}(\boldsymbol{\bar{P}}) \triangleq \prod_{n=1}^N \zeta_{k,n}(\boldsymbol{\bar{P}})$, where $\zeta_{k,n}(\boldsymbol{\bar{P}})=g_{k,n} \underset{l\in \mathcal{K}}{\sum} \bar{p}_{l,n}+\sigma_{k,n}^2$. Then, the LHS of \eqref{subeq:Equiv_con1} is simply $\mathsf{ISNR}_{{k}}(\boldsymbol{\bar{P}}) = \frac{\gamma_k(\boldsymbol{\bar{P}})}{{\lambda}_k(\boldsymbol{\bar{P}})}$. To enable geometric programming, we need ${\mathsf{ISNR}_{k}}(\boldsymbol{\bar{P}})$ to be a posynomial\footnote{A function $f(\boldsymbol{x}) = c x_1^{a_1} x_2^{a_2} \dotsc x_n^{a_n}$, where $c>0$, $a_i\in\mathbb{R}$, $\boldsymbol{x}\in\mathbb{R}^n_{++}$ is called a monomial. A posynomial is defined as a sum of monomials \cite{Boyd2023}.}. Since $\gamma_{k}(\boldsymbol{\bar{P}})$ is already a posynomial, we only need to approximate posynomial $\lambda_{k}(\boldsymbol{\bar{P}})$ by a monomial by using an arithmetic-geometric mean approximation. This approximation is iteratively refined through the successive convex approximation framework outlined in Algorithm~\ref{alg:GeomProg}. 
{In the $i$th iteration, we solve a (convex) GP over $\boldsymbol{\bar{P}}$. The solution, denoted by $\boldsymbol{\bar{P}}^{[i]}$, is used to compute coefficients for the GP over $\boldsymbol{\bar{P}}$ in the next $(i+1)$th iteration.}

Specifically, at the $i$th iteration, we approximate the posynomial $\zeta_{k,n}(\boldsymbol{\bar{P}})$ by the following monomial:
\begin{equation}\label{eq:ApproxZeta}
	\tilde{\zeta}_{k,n}^{[i]}(\boldsymbol{\bar{P}}) = \prod_{m=1}^{K+1} \left(\frac{u_{k,n}^{(m)}(\boldsymbol{\bar{P}})}{w_{k,n}^{(m),[i]}}\right)^{w_{k,n}^{(m),[i]}},
\end{equation}
where $u_{k,n}^{(m)}(\boldsymbol{\bar{P}})$ is defined as the $m$th monomial component of the posynomial $\zeta_{k,n}(\boldsymbol{\bar{P}})$:
\begin{equation*}
	u_{k,n}^{(m)}(\boldsymbol{\bar{P}})=\begin{cases}
		g_{k,n}\bar{p}_{m,n}, & \text{if } m = 1, \dotsc, K\\
		\sigma_{k,n}^2, & \text{if } m = K + 1,
	\end{cases}
\end{equation*}
and the weights $w_{k,n}^{(m),[i]}$ are calculated by using $\boldsymbol{\bar{P}}^{[i-1]}$  of the previous iteration as
\begin{equation}\label{eq:weights}
	w_{k,n}^{(m),[i]} = \frac{u_{k,n}^{(m)}(\boldsymbol{\bar{P}}^{[i-1]})}{\zeta_{k,n}(\boldsymbol{\bar{P}}^{[i-1]})}.
\end{equation}
Then, $\lambda_{k}(\boldsymbol{\bar{P}})$ is approximated by the monomial $\tilde{\lambda}^{[i]}_k(\boldsymbol{\bar{P}}) = \prod_{n=1}^{N} \tilde{\zeta}^{[i]}_{k,n}(\boldsymbol{\bar{P}})$, and the resulting approximation $\widehat{\mathsf{ISNR}}^{[i]}_k(\boldsymbol{\bar{P}}) = \frac{\gamma_k(\boldsymbol{\bar{P}})}{\tilde{\lambda}^{[i]}_k(\boldsymbol{\bar{P}})}$ is finally a posynomial. At the $i$th iteration, instead of solving problem \eqref{eq:EquivProb2}, we solve its approximated version:
\begin{subequations}\label{eq:EquivProb3}
	\begin{align}
		\underset{\boldsymbol{\bar{P}}}{\min} \quad & b\\
		\textrm{s.t.}~~ &\widehat{\mathsf{ISNR}}^{[i]}_k(\boldsymbol{\bar{P}}) \leq b \quad \forall k \label{subeq:Equiv_con3}\\
		&\sum_{k=1}^{K} \bar{p}_{k,n} \leq P_{\text{total}} \quad \forall n\\
		&\bar{p}_{k,n} \geq 0 \quad \forall k, \forall n.
	\end{align}
\end{subequations} 
Problem \eqref{eq:EquivProb3} is a (convex) GP, which can be solved very efficiently by any convex optimization solver, e.g., using the interior-point method \cite{Boyd2023}. Algorithm~\ref{alg:GeomProg} iterates until there is no improvement in the solution. It can be shown that the solution obtained at the convergence of Algorithm~\ref{alg:GeomProg} satisfies the Karush-Kuhn-Tucker conditions for problem \eqref{eq:EquivProb2} (and hence problem \eqref{eq:GenieProb}), indicating that the necessary conditions for optimality are fulfilled  \cite{Chiang2007}.

\begin{algorithm}[t]\label{alg:GeomProg}
	Initialize $\boldsymbol{\bar{P}}^{[0]}$ and $i=1$\;
	\Repeat{convergence}{
		Calculate weights $w_{k,n}^{(m),[i]}$ $\forall k, \forall n, \forall m$, by \eqref{eq:weights}\;
        Evaluate $\tilde{\zeta}^{[i]}_{k,n}(\boldsymbol{\bar{P}}) ~~\forall k, \forall n$, by \eqref{eq:ApproxZeta}\;
		Evaluate $\tilde{\lambda}^{[i]}_k(\boldsymbol{\bar{P}}) = \prod_{n=1}^{N} \tilde{\zeta}^{[i]}_{k,n}(\boldsymbol{\bar{P}}) ~~\forall k$\;
		Evaluate $\widehat{\mathsf{ISNR}^{[i]}_{k}}(\boldsymbol{\bar{P}}) = \frac{\gamma_k(\boldsymbol{\bar{P}})}{\tilde{\lambda}^{[i]}_k(\boldsymbol{\bar{P}})} ~~\forall k$\;
		Compute $\boldsymbol{\bar{P}}^{[i]}$ by solving geometric program \eqref{eq:EquivProb3}\;
		Set $i=i+1$\;
	}
	\caption{Successive convex approximation for problem \eqref{eq:GenieProb}.}
\end{algorithm}

\subsection{Proposed Achievable Scheme}
In \eqref{eq:ProbForm}, the channel gains are not known ahead of time. For such practical scenarios, we propose an achievable transmission scheme (called APASS) in which predictions of the \emph{future} channel gains are used to solve for power allocation at a current time slot. Since APASS does not have perfect knowledge of channel gains at all times, it offers a lower bound to the optimal value $R^*$ of the original problem \eqref{eq:ProbForm}.
Several existing methods such as deep learning \cite{Zhang2021}, auto-regression, support vector machines, and Bayesian inference \cite{Liao2015} can be used for channel prediction. However, the development of specific channel prediction methods is out of the scope of this paper, where we focus on the resource allocation aspect instead.

Specifically, the proposed APASS runs at the beginning of each time slot $n\in\{1,\dotsc,N\}$, at which point the transmitter has already committed to power allocations $\boldsymbol{\bar p}_{[1:n-1]} :=[\boldsymbol{\bar p}_1 \dotsm \boldsymbol{\bar p}_{n-1}]$ and transmitted messages in \emph{previous} time slots $\{1,\dotsc, n-1\}$. It now acquires the actual channel gains $g_{k,n}$ for the \emph{current} time slot~$n$ and predicts channel gains $\tilde g_{k,l}$ for \emph{future} time slots $l \in \{n+1, \dotsc, N\}$. With this information, APASS solves the power allocations $\boldsymbol{\bar p}_{[n:N]} :=[\boldsymbol{\bar p}_n \dotsm \boldsymbol{\bar p}_{N}]$ for the current time slot $n$ and future time slots $\{n+1,\dotsc, N\}$: 
\begin{subequations}\label{eq:Scheme1}
	\begin{align}
	R_n^\text{A} (\boldsymbol{\bar p}_{[1:n-1]},\boldsymbol{g}) &=	\underset{\boldsymbol{\bar p}_{[n:N]}}{\max}  \underset{k\in \{1,\dotsc, K\}}{\min}\Biggl\{\frac{1}{n}\sum_{m=1}^{n}R_{k,m}\left( \boldsymbol{\bar{p}}_m, g_{k,m} \right) \notag \\ 
        & \quad + \frac{1}{N-n}\sum_{l=n+1}^{N}R_{k,l}\left(\boldsymbol{\bar{p}}_{l}, \tilde{g}_{k,l} \right)\Biggr\},\label{subeq:Scheme1_Obj}\\
		&\textrm{s.t. }\sum_{k=1}^{K} \bar{p}_{k,n} \leq P_{\text{total}} \quad \forall n,\\
		&\quad~~
        \bar{p}_{k,n} \geq 0 \quad \forall k, \forall n.
	\end{align}
\end{subequations}
Once a solution $\boldsymbol{\bar p}_{[n:N]}$ is found for time slot $n$, the transmitter commits only to $\boldsymbol{\bar p}_n$ and uses this power allocation to transmit messages in the time slot~$n$. The proposed APASS is then repeated at the beginning of the next time slot~$n+1$. This scheme is illustrated in Fig.~\ref{fig:schematic}.

Since problem \eqref{eq:Scheme1} has the same structure as that of problem \eqref{eq:GenieProb}, Algorithm~\ref{alg:GeomProg} can be straightforwardly modified to solve the former. Specifically, the approximation $\widehat{\mathsf{ISNR}}^{[i]}_k(\boldsymbol{\bar{P}})$ is modified according to the form of \eqref{subeq:Scheme1_Obj}. In each iteration of the modified version of Algorithm~\ref{alg:GeomProg}, a modified GP is solved.
\begin{lemma}
		 If $\tilde{g}_{k,n} = {g}_{k,n},~\forall k, ~\forall n$, then APASS, which is \eqref{eq:Scheme1} run successively over $N$ time slots, attains  $R^\text{u} (\boldsymbol{g})$.
	\end{lemma}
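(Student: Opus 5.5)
The plan is an induction over the slots. The invariant is that the power allocations APASS has committed to so far always form the prefix of some optimal solution of the genie-aided problem \eqref{eq:GenieProb} for the realization $\boldsymbol{g}$.

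Two conventions first.
\begin{itemize}
\item Each instance of \eqref{eq:Scheme1} is assumed to be solved to global optimality. The lemma is a statement about the optimization problems, not about the KKT point returned by Algorithm~\ref{alg:GeomProg}.
\item With $\tilde g_{k,l}=g_{k,l}$, the bracket in \eqref{subeq:Scheme1_Obj} is read as the block rate $R_k(\boldsymbol{\bar P},\boldsymbol{g}_k)$ of \eqref{eq:total-rate-def}. Here $\boldsymbol{\bar P}=[\boldsymbol{\bar p}_{[1:n-1]}\ \boldsymbol{\bar p}_{[n:N]}]$, and the past and future sums carry the common weight $1/N$. The weights $1/n$ and $1/(N-n)$ as printed give a different objective. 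I expect this bookkeeping point to be the main obstacle: the argument below needs the stage-$n$ objective to be exactly $\min_k R_k(\boldsymbol{\bar P},\boldsymbol{g}_k)$, and I would state explicitly that this normalization is the intended one.
\end{itemize}

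Under this reading, define for $n\in\{1,\dots,N\}$ the problem $\mathcal{P}_n(\boldsymbol{\bar p}_{[1:n-1]})$ as follows: maximize $\min_k R_k(\boldsymbol{\bar P},\boldsymbol{g}_k)$ over $\boldsymbol{\bar p}_{[n:N]}$, subject to the per-slot power constraints, with the prefix held fixed. Then $\mathcal{P}_1$ is exactly \eqref{eq:GenieProb}. The inductive claim is that if the prefix $\boldsymbol{\bar p}_{[1:n-1]}$ agrees with some genie-optimal $\boldsymbol{\bar P}^\star$ (one attaining $R^\text{u}(\boldsymbol{g})$), then the same holds for $\boldsymbol{\bar p}_{[1:n]}$ after APASS commits $\boldsymbol{\bar p}_n$ at slot $n$.

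The inductive step is a sandwich argument.
\begin{itemize}
\item Because the prefix matches $\boldsymbol{\bar P}^\star$, the tail $\boldsymbol{\bar p}^\star_{[n:N]}$ is feasible for $\mathcal{P}_n$. Hence the optimal value of $\mathcal{P}_n$ is at least $R^\text{u}(\boldsymbol{g})$.
\item Every feasible point of $\mathcal{P}_n$, concatenated with the prefix, is feasible for \eqref{eq:GenieProb}. Hence the optimal value of $\mathcal{P}_n$ is at most $R^\text{u}(\boldsymbol{g})$.
\end{itemize}
So any maximizer $\boldsymbol{\bar p}^{(n)}_{[n:N]}$ returned by APASS yields a full allocation $[\boldsymbol{\bar p}_{[1:n-1]}\ \boldsymbol{\bar p}^{(n)}_{[n:N]}]$ that attains $R^\text{u}(\boldsymbol{g})$, and is therefore itself genie-optimal. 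APASS commits only $\boldsymbol{\bar p}_n=\boldsymbol{\bar p}^{(n)}_n$, and the new prefix $\boldsymbol{\bar p}_{[1:n]}$ is a prefix of this genie-optimal allocation. This holds regardless of which maximizer the solver picks when optima are non-unique, so no tie-breaking assumption is needed.

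For the base case $n=1$ the prefix is empty, and the claim holds whenever \eqref{eq:GenieProb} attains its maximum. It does, since the feasible set is compact and the objective is continuous.

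At the last slot $n=N$, the step shows that the committed $\boldsymbol{\bar P}=[\boldsymbol{\bar p}_1\cdots\boldsymbol{\bar p}_N]$ is genie-optimal. Hence $\min_k R_k(\boldsymbol{\bar P},\boldsymbol{g}_k)=R^\text{u}(\boldsymbol{g})$, which is the claim. Combined with $R^\text{u}(\boldsymbol{g})\ge R(\boldsymbol f,\boldsymbol g)$ established earlier, this also shows the genie-aided upper bound is tight when predictions are perfect.
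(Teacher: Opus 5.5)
Your proposal is correct and follows essentially the paper's argument. The paper shows that the stage value $R_n^\text{A}$ never decreases because the previous stage's tail $[\boldsymbol{\bar p}^{(n)}_{n+1}\cdots\boldsymbol{\bar p}^{(n)}_{N}]$ stays feasible, starting from $R_1^\text{A}(\emptyset,\boldsymbol{g})=R^\text{u}(\boldsymbol{g})$; this is your lower-bound step, and your upper bound via feasibility in \eqref{eq:GenieProb} makes explicit what the paper leaves implicit. Your insistence on the common weight $1/N$ is justified, since with the printed weights $1/n$ and $1/(N-n)$ neither does the identity $R_1^\text{A}=R^\text{u}$ hold nor does the monotonicity step go through.
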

	
	\begin{IEEEproof}
		When $n = 1$, $R_{1}^\text{A} (\emptyset,\boldsymbol{g})=R^\text{u} (\boldsymbol{g})$. In the $n$th iteration of APASS, $\forall n \in \{1,\dots, N\}$, denote $\boldsymbol{\bar p}^{(n)}_{[1:n-1]} :=[\boldsymbol{\bar p}^{(n)}_1 \dotsm \boldsymbol{\bar p}^{(n)}_{n-1}]$ as the already committed power allocation and $\boldsymbol{\bar p}^{(n)}_{[n:N]}:=[\boldsymbol{\bar p}^{(n)}_n \dotsm \boldsymbol{\bar p}^{(n)}_{N}]$ as the obtained solution for \eqref{eq:Scheme1}. We claim that $R_{n+1}^\text{A} ([\boldsymbol{\bar p}^{(n)}_{[1:n]},\boldsymbol{g}) \geq R_{n}^\text{A} ([\boldsymbol{\bar p}^{(n)}_{[1:n-1]},\boldsymbol{g})$; otherwise, choosing $[\boldsymbol{\bar p}^{(n)}_{n+1} \dotsm \boldsymbol{\bar p}^{(n)}_{N}]$ as a solution for the $(n+1)$th iteration can strictly improve $R_{n+1}^\text{A} ([\boldsymbol{\bar p}^{(n)}_{[1:n]},\boldsymbol{g})$, which results in a contradiction. Since the value of \eqref{subeq:Scheme1_Obj} cannot decrease in each iteration, we obtain the required result by induction.
\end{IEEEproof}

\begin{figure}[!t]
	\centering
	\includegraphics[width=0.34\textwidth]{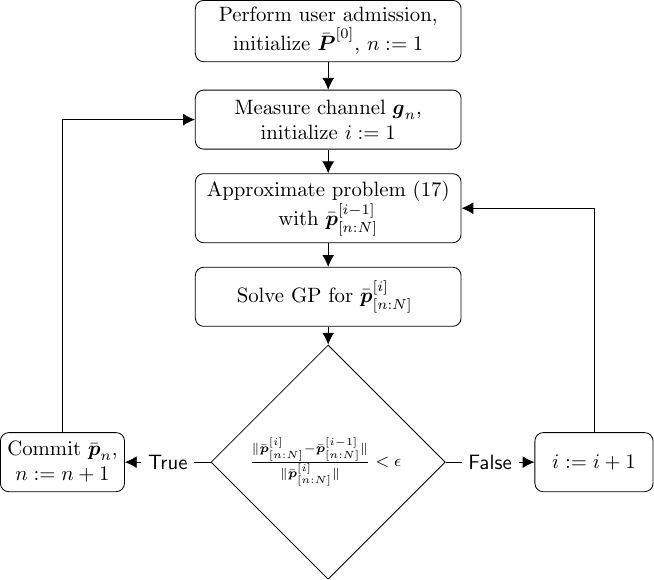}
	\caption{Schematic diagram of the APASS successive convex approximation algorithm. $\lVert\cdot\rVert$ is the Frobenius norm. $\epsilon$ is a small error tolerance value.}
	\label{fig:schematic}
\end{figure}

\vspace{-3mm}
\subsection{Complexity Analysis}
The APASS algorithm has two main components: (i) the problem-building component which includes calculating weights $w_{k,n}^{(m), [i]}$ and evaluating the approximated posynomial $\mathsf{ISNR}_{{k}}(\boldsymbol{\bar{P}})$; and
(ii) the problem-solving component, i.e., the interior-point solver used to evaluate the GP. The solver dominates the time complexity. Using an interior-point method for exponential conic programs typically requires solution of a linear system. Solving this linear system is the most expensive operation and has a time complexity of $\mathcal{O}(\nu^3)$ where $\nu$ represents the dimensions of the coefficient matrix \cite{Nocedal2006}. In our case $\nu$ is proportional to $K \times N$, therefore each solver call has complexity $\mathcal{O}(K^3 N^3)$. For each time slot $n$, the solver is called $I_n$ times, until convergence. Assuming $I := \max_n I_n$, the worst-case complexity of APASS is then $\mathcal{O}(I K^3 N^4)$. However, the practical complexity is generally much lower, as modern solvers can exploit parallelization as well as the sparsity and structure of the matrix to simplify the problem \cite{Dahl2021}.
\vspace{-3mm}

\subsection{Channel predictions}\label{sec:pred}
The focus of this paper is designing an effective power allocation and scheduling scheme to take advantage of channel predictions. Importantly, the focus is not on proposing or implementing these predictions, as this has been covered well in the existing literature. Traditional approaches have used statistical models to predict channel gains \cite{Fontan2001}. However, recent advancements in deep learning have also demonstrated significant promise in modeling and predicting channel gains in dynamic satellite systems.

For instance, Zhang et al. \cite{Zhang2022} introduced a deep learning-based method for downlink channel prediction in LEO systems. Similarly, Ying et al. \cite{Ying2024} proposed a convolutional neural network (CNN) and long short-term memory (LSTM)-based framework for joint channel prediction in LEO satellite systems, showing robust performance under high Doppler shifts and propagation delays. Recently, Zhao et al. \cite{Zhao2025} have presented a prediction method for the satellite downlink channel which combines CNN and transformer architectures and outperforms both a LSTM model and a transformer-only model.

\section{Numerical Results}\label{sec:Results}
Simulations are run with the parameters given in Table~\ref{tab:Parameters}. The satellite position is based on publicly available orbital element data for a SpaceX Starlink satellite. UE coordinates are normally distributed about a geographic point under the satellite's path near Canberra, Australia. All simulated UEs are located within a 50 km radius from the point under the satellite's path. The spotbeam's effective isotropic radiated power (EIRP) is used as the total power constraint $P_\text{total}$, with a conversion from Watts to Joules/sample. Note that the satellite beam is highly directional, and the EIRP is the equivalent power needed for the same received power using an isotropic antenna, not the actual power output of the satellite.

\begin{table}[!t]
	\begin{center}
		\caption{Simulation parameters and constants.}
		\label{tab:Parameters}
		\begin{tabular}{| c | c | c | c |}
			\hline
			Parameter & Symbol & Value & Unit\\
			\hline
			No. of ground UEs & $K$ & 20 & \\
			Coherence time &  $T_\text{coh}$ & 5 &s\\
			No. of time slots considered  &  $N$ & 30& \\
			Carrier frequency & $f_\text{c}$ & 27.5 &GHz\\		
			Radius of Earth& $R_\text{E}$ & 6371 &km\\			
			Orbital altitude&$d_0$ & 540 &km\\ 			
            No. of sinusoids (fading)& $L$ & 10&\\			 
			Effective isotropic radiated power & $\text{EIRP}$ & 5 &MW \\			
			Noise temperature & $T_\text{n}$ & 290 &K \\			
			System bandwidth & $W$ & 5 &MHz\\

			\hline
		\end{tabular}
        \vspace*{-3ex}
	\end{center}

\end{table}

First, to evaluate the effectiveness of the proposed APASS with varying levels of channel prediction accuracy, we generate the predicted channel gains by adding a complex normal error term to the actual channel gains, i.e., $\tilde{h}_{k,l} = h_{k,l} + \delta_{k,l} $, where $\delta_{k,l} \sim\mathcal{CN}(0,\sigma_{\text{e}}^2)$. As explained in Section \ref{sec:pred}, we do not propose a specific prediction method, but instead use Gaussian errors to illustrate the effect of prediction error on our algorithm. Fig.~\ref{fig:Results1} plots the results of APASS (by solving \eqref{eq:Scheme1}) as a fraction of the upper bound result (by solving \eqref{eq:GenieProb}) against the error variance. The genie-aided upper bound represents the case of perfect channel predictions. Fig.~\ref{fig:Results1} also shows the fairness of the UE rates as measured by Jain's fairness index \cite{Jain1984}. Here, the expected trend of decreasing effectiveness with increasing prediction error is clearly shown. It is also seen that as error increases, the achieved minimum rate deviates further from the apparent trend. The proposed APASS performs well even in cases with a significant prediction error, i.e., error variance equal to a quarter of the mean channel gain. In particular, a minimum rate above 80\% of the upper bound is achieved, while the fairness index remains well above 0.99.

\begin{figure*}[!t]
\centering
\subfloat[]{\includegraphics[height=0.32\textwidth]{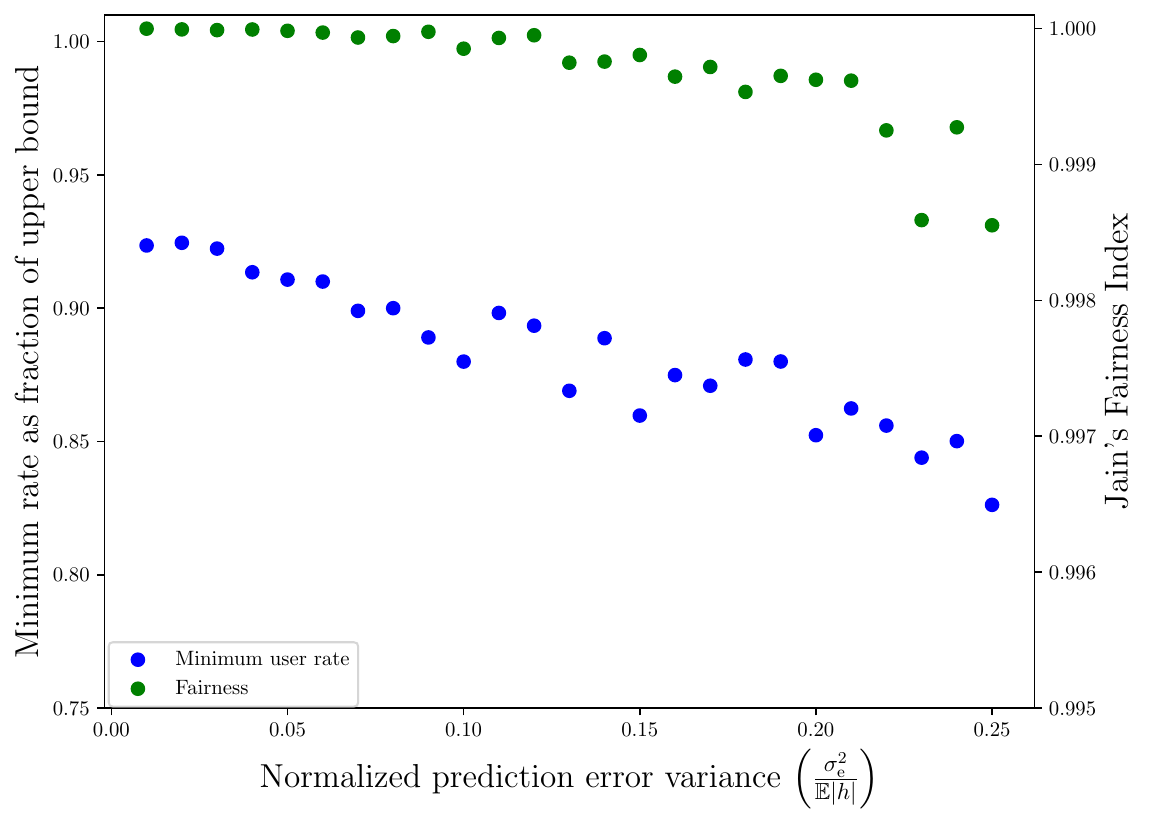}
\label{fig:Results1}}
\hfil
\subfloat[]{\includegraphics[height=0.32\textwidth]{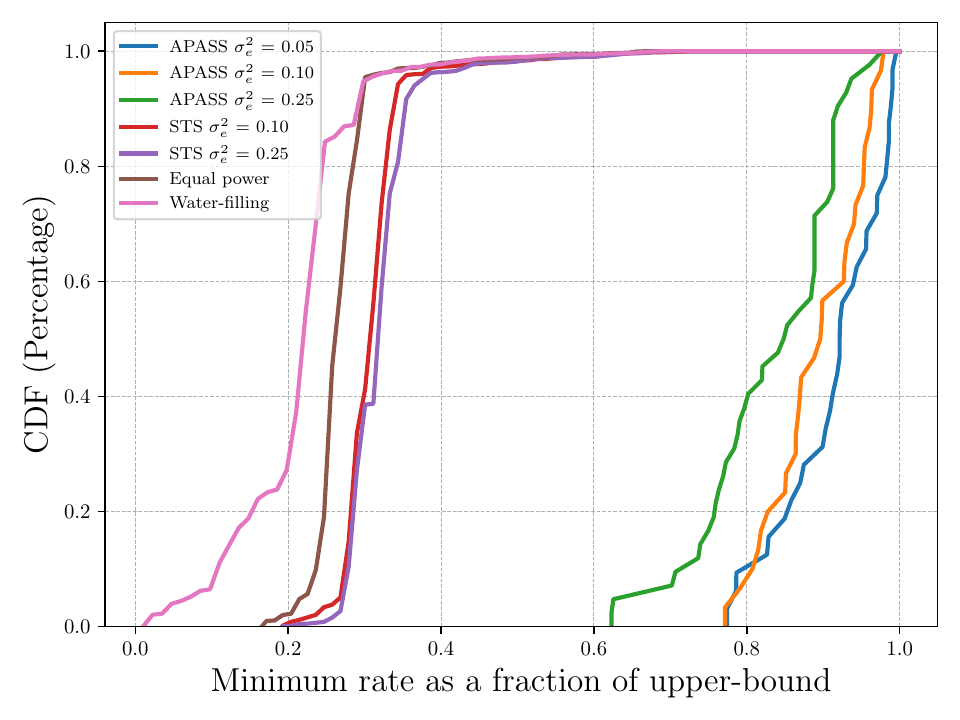}
\label{fig:Results3}}
\caption{(a) The minimum UE rate achieved by APASS as a percentage of the upper bound result and the scheme's fairness index plotted against the normalized variance of the channel prediction error. (b) Comparing cumulative distribution functions. The proposed APASS is simulated with a normalized prediction error variances of 0.05, 0.10 and 0.25. The STS scheme predicts one time slot ahead with error variance of 0.10 and 0.25. The equal-power scheme transmits to all UEs with equal power. The water-filling algorithm assigns power based on channel gains in each time slot.}
\label{fig:Results}

\end{figure*}

 Then, we compare APASS against three other achievable schemes that also operate with no future knowledge of the channel. The first scheme, called single-time-step (STS) scheme, has the same objective of max-min rates, but only predicts the future channel gain one time step ahead. 
 The second scheme, called the equal-power scheme, transmits equal power to all UEs for all time slots. 
 The third scheme uses the well-known water-filling algorithm \cite{Proakis2007} to allocate power between users in each time slot. 
 The time complexities of the STS, equal-power, and water-filling schemes are $\mathcal{O}(K^{3}N)$, $\mathcal{O}(K)$, and $\mathcal{O}(KN\log(K))$, respectively. To ensure fair comparison, all schemes are evaluated on the same simulated channels with $N$ time slots each. Fig.~\ref{fig:Results3} shows the cumulative distribution functions (CDF) of the minimum rate as a fraction of the upper-bound. The proposed APASS vastly outperforms the other schemes, even with considerable prediction error variance. Significantly, APASS with $\sigma^2_e = 0.25$ provides on average a minimum UE rate that is $2.30$, $2.98$, and  $3.86$ times higher than the STS, equal-power, and water-filling schemes, respectively.

\bibliography{Bibliography} 

\begin{thebibliography}{10}

\bibitem{Letzepis2008}
N.~Letzepis and A.~J. Grant, ``Capacity of the multiple spot beam satellite
  channel with {R}ician fading,'' {\em IEEE Trans. Inf. Theory}, vol.~54, Nov.
  2008.

\bibitem{You2020}
L.~You, K.-X. Li, J.~Wang, X.~Gao, X.-G. Xia, and B.~Ottersten, ``Massive
  {MIMO} transmission for {LEO} satellite communications,'' {\em IEEE J. Sel.
  Areas Commun.}, vol.~38, pp.~1851--1865, Aug. 2020.

\bibitem{Gao2021}
Z.~Gao, A.~Liu, C.~Han, and X.~Liang, ``Sum rate maximization of massive {MIMO}
  {NOMA} in {LEO} satellite communication system,'' {\em IEEE Wireless Commun.
  Lett.}, vol.~10, no.~8, pp.~1667--1671, 2021.

\bibitem{Hu2020}
J.~Hu, G.~Li, D.~Bian, L.~Gou, and C.~Wang, ``Optimal power control for
  cognitive {LEO} constellation with terrestrial networks,'' {\em IEEE
  Communications Letters}, vol.~24, no.~3, pp.~622--625, 2020.

\bibitem{Lu2025}
S.~Lu, G.~Xu, L.~Zhu, Z.~Song, and W.~Zhang, ``An adaptive power allocation for
  {NOMA}-based multi-layer satellite system,'' {\em IEEE Transactions on
  Wireless Communications}, vol.~24, no.~6, pp.~5269--5281, 2025.

\bibitem{Shoarinejad2003}
K.~Shoarinejad, J.~Speyer, and G.~Pottie, ``Integrated predictive power control
  and dynamic channel assignment in mobile radio systems,'' {\em IEEE
  Transactions on Wireless Communications}, vol.~2, no.~5, pp.~976--988, 2003.

\bibitem{Vu2023}
T.~X. Vu, S.~Bhandari, M.~Minardi, H.~Van~Nguyen, and S.~Chatzinotas, ``{3GPP}
  new radio precoding in {NGSO} satellites: Channel prediction and dynamic
  resource allocation,'' in {\em 2023 IEEE Statistical Signal Processing
  Workshop (SSP)}, pp.~115--119, 2023.

\bibitem{3GPP2020}
``Study on new radio ({NR}) to support non-terrestrial networks,'' TR 38.811,
  Release 15, 3rd Generation Partnership Project, 2020.

\bibitem{ITU676}
``Attenuation by atmospheric gases and related effects.,'' ITU-R P.676,
  International Telecommunication Union, 2022.

\bibitem{Jakes1994}
W.~C. Jakes and D.~C. Cox, {\em Microwave Mobile Communications}.
\newblock Wiley-IEEE Press, 1994.

\bibitem{Patzold2006}
M.~Patzold and B.~Hogstad, ``Two new methods for the generation of multiple
  uncorrelated {R}ayleigh fading waveforms,'' in {\em IEEE VTC}, 2006.

\bibitem{Boyd2023}
S.~P. Boyd and L.~Vandenberghe, {\em Convex optimization}.
\newblock Cambridge University Press, 2004.

\bibitem{Chiang2007}
M.~Chiang {\em et~al.}, ``Power control by geometric programming,'' {\em IEEE
  Trans. Wireless Commun.}, vol.~6, pp.~2640--2651, July 2007.

\bibitem{Zhang2021}
Y.~Zhang, Y.~Wu, A.~Liu, X.~Xia, T.~Pan, and X.~Liu, ``Deep learning-based
  channel prediction for {LEO} satellite massive {MIMO} communication system,''
  {\em IEEE Wireless Commun. Lett.}, vol.~10, Aug. 2021.

\bibitem{Liao2015}
Q.~Liao, S.~Valentin, and S.~Stanczak, ``Channel gain prediction in wireless
  networks based on spatial-temporal correlation,'' in {\em IEEE SPAWC}, 2015.

\bibitem{Nocedal2006}
S.~J.~W. Jorge~Nocedal, {\em Numerical Optimization}.
\newblock Springer New York, 2006.
\newblock Chapter 14.

\bibitem{Dahl2021}
J.~Dahl and E.~D. Andersen, ``A primal-dual interior-point algorithm for
  nonsymmetric exponential-cone optimization,'' {\em Mathematical Programming},
  2021.

\bibitem{Fontan2001}
F.~Fontan, M.~Vazquez-Castro, C.~Cabado, J.~Garcia, and E.~Kubista,
  ``Statistical modeling of the {LMS} channel,'' {\em IEEE Transactions on
  Vehicular Technology}, vol.~50, no.~6, pp.~1549--1567, 2001.

\bibitem{Zhang2022}
Y.~Zhang, A.~Liu, P.~Li, and S.~Jiang, ``Deep learning ({DL})-based channel
  prediction and hybrid beamforming for {LEO} satellite massive {MIMO}
  system,'' {\em IEEE Internet of Things Journal}, vol.~9, no.~23,
  pp.~23705--23715, 2022.

\bibitem{Ying2024}
M.~Ying, X.~Chen, Q.~Qi, and W.~Gerstacker, ``Deep learning-based joint channel
  prediction and multibeam precoding for {LEO} satellite internet of things,''
  {\em IEEE Transactions on Wireless Communications}, vol.~23, no.~10,
  pp.~13946--13960, 2024.

\bibitem{Zhao2025}
C.~Zhao, Y.~He, and X.~Wang, ``Parallel forecasting of {LEO} satellite downlink
  channels using a {CNN}-transformer architecture,'' in {\em 2025 IEEE 7th
  International Conference on Communications, Information System and Computer
  Engineering (CISCE)}, pp.~97--101, 2025.

\bibitem{Jain1984}
R.~Jain, D.-M. Chiu, and W.~Hawe, ``A quantitative measure of fairness and
  discrimination for resource allocation in shared computer system,'' {\em
  Eastern Research Laboratory, Digital Equipment Corporation}, 1984.

\bibitem{Proakis2007}
J.~Proakis and P.~Massoud~Salehi, {\em Digital Communications}.
\newblock McGraw-Hill Education, 2007.

\end{thebibliography}
\bibliographystyle{ieeetr}

\end{document}